\newtheorem{theorem}{Theorem}
\newtheorem{proposition}[theorem]{Proposition}
\DeclareMathOperator*{\argmax}{arg\,max}
\title{MEGAN: A Generative Adversarial Network for Multi-View Network Embedding}
\author{
Yiwei Sun$^1$\and
Suhang Wang$^2$\and
Tsung-Yu Hsieh$^{1}$\and
Xianfeng Tang$^2$\and
Vasant Honavar$^1{}^2$\\
\affiliations
$^1$Department of Computer Science and Engineering, The Pennsylvania State University, USA\\
$^2$College of Information Sciences and Technology, The Pennsylvania State University, USA\\
\emails
\{yus162,szw494,tuh45,xut10,vuh14\}@psu.edu
}
\begin{document}

\maketitle

\begin{abstract}
Data from many real-world applications can be naturally represented by {\em multi-view} networks where the different views encode different types of relationships (e.g., friendship, shared interests in music, etc.) between real-world individuals or entities. There is an urgent need for methods to obtain low-dimensional, information preserving and typically nonlinear embeddings of such multi-view networks. However, most of the work on multi-view learning focuses on data that lack a network structure, and most of the work on network embeddings has focused primarily on {\em single-view} networks. Against this background, we consider the multi-view network representation learning problem, i.e., the problem of constructing low-dimensional information preserving embeddings of {\em multi-view} networks. Specifically, we investigate a novel Generative Adversarial Network (GAN) framework for Multi-View Network Embedding, namely MEGAN, aimed at preserving the information from the individual network views, while accounting for connectivity across (and hence complementarity of and correlations between) different views. The results of our experiments on two real-world multi-view data sets show that the embeddings obtained using MEGAN outperform the state-of-the-art methods on node classification, link prediction and visualization tasks.
\end{abstract}

\section{Introduction}
Network embedding or network representation learning, which aims to learn low-dimensional, information preserving and typically non-linear representations of networks, has been shown to be useful in many tasks, such as link prediction~\cite{perozzi2014deepwalk}, community detection~\cite{he2015detecting,cavallari2017learning} and node classification~\cite{wang2016linked}. A variety of network embedding schemes have been proposed in the literature  ~\cite{ou2016asymmetric,wang2016structural,grover2016node2vec,hamilton2017inductive,yu2018representation,zhang2018arbitrary}. However, most of these methods focus on {\em single-view} networks, i.e., networks with only one type relationships between nodes.
\begin{figure}[ht!]
      \centering
        \includegraphics[width=8.2cm]{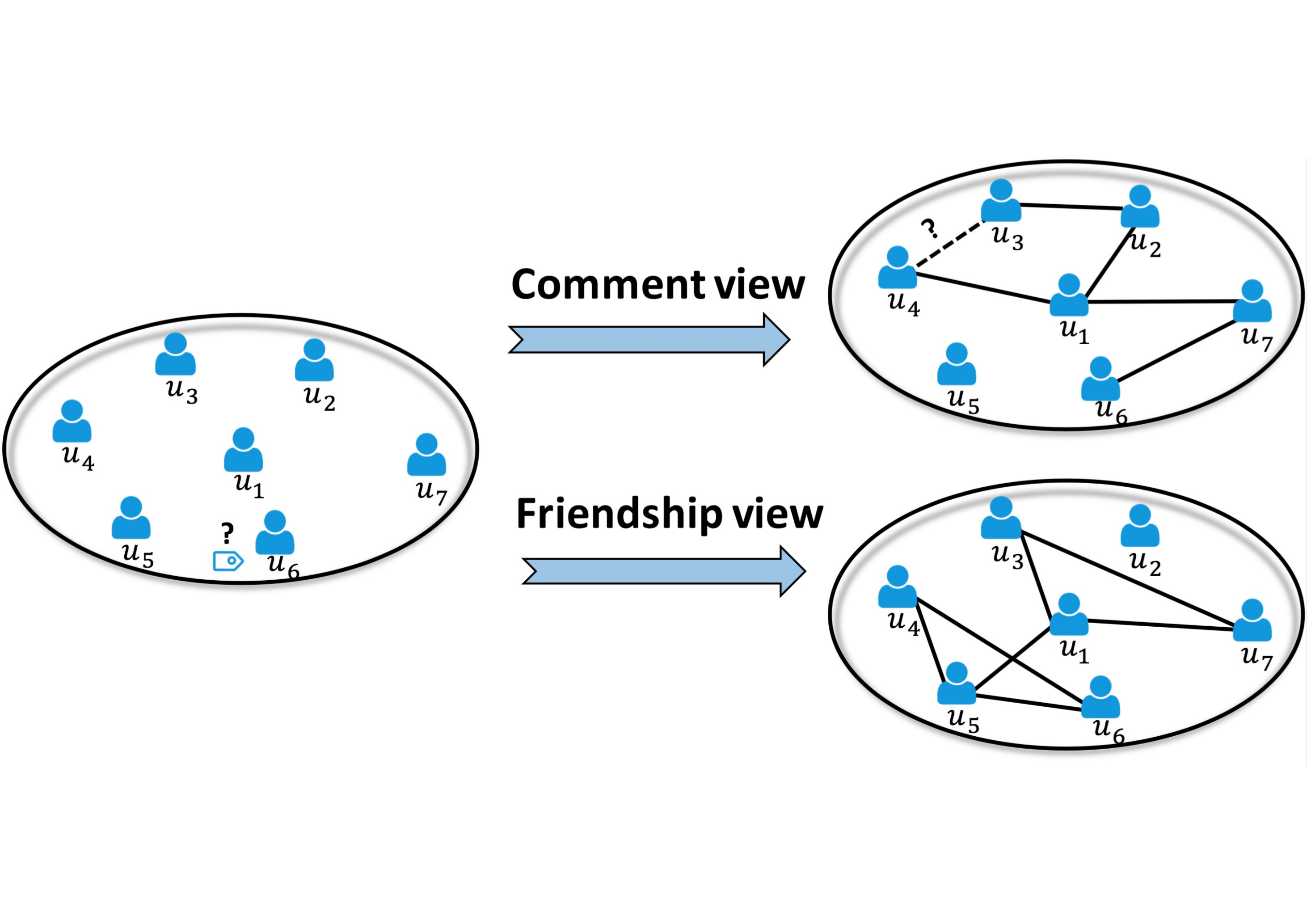}
        \vskip -2.4em
        \caption{The toy multi-view network containing 7 users (nodes) and comprised of comment view and friendship view }
    \vskip -1.5em
       \label{Fig_vis:multiview}%
\end{figure}

However, data from many real-world applications are best represented by {\em multi-view} networks~\cite{kivela2014multilayer}, where the nodes are linked by \emph{multiple} types of relations. For example, in Flickr, two users can have multiple relations such as friendship and communicative interactions (e.g., public comments); In Facebook, two users can share friendship, enrollment in the same university,  or likes and dislikes.  
For example,  Fig.\ref{Fig_vis:multiview} shows a 2-view network comprised of the  friendship view and the comment view. 
Such networks present multi-view counterparts of problems considered in the single-view setting, such as, node classification (e.g., labelling the user to the specific categories of interests(tag of $u_6$)) and link prediction (e.g., predicting a future link, say between $u_3$ and $u_4$). 
Previous work~\cite{wang2015deep,shi2016dynamics} has shown that taking advantage of the complementary and synergy information supplied by the different views can lead to improved performance. Hence, there is a growing interest in multi-view network representation learning (MVNRL) methods that effectively integrate information from disparate views~\cite{shi2016dynamics,qu2017attention,zhang2018scalable,DBLP:conf/uai/HuangLZWYC18,ma2019multi} . However, there is significant room for improving both our understanding of the theoretical underpinnings as well as practical methods on real-world applications. 

Generative adversarial networks (GAN)~\cite{goodfellow2014generative}, which have several attractive properties, including robustness in the face adversarial data samples, noise in the data, etc., have been shown to be especially effective for  modeling the underlying complex data distributions by discovering latent representations. A GAN consists of two sub-networks, a {\em generator} which is trained to generate adversarial data samples by learning a mapping from a latent space to a data distribution of interest, and a discriminator that is trained to discriminate between data samples drawn from the true data distribution and the adversarial samples produced by the generator. Recent work has demonstrated that GAN can be used to effectively perform network representation learning in the single view setting ~\cite{wang2018graphgan,bojchevski2018netgan}. Against this background, it is natural to consider whether such approaches can be extended to the setting of multi-view networks. However, there has been little work along this direction.  

Effective approaches to multi-view network embedding using GAN have to overcome the key challenge that is absent in the single-view setting: in the single-view setting,  the generator, in order to produce adversarial samples, needs to model only the connectivity (presence or absence of a link) between pairs of nodes, in multi-view networks, how to model not only the connectivity within each of the different, but also the complex correlations between views.
The key contributions of the paper are as follows:
\begin{itemize}
\item 
We propose the \underline{M}ulti-view network \underline{E}mbedding \underline{GAN} (MEGAN), a novel GAN framework for learning a low dimensional, typically non-linear and information preserving embedding of a given multi-view network. 
\item 
Specifically, we show how to design a generator that can effectively produce adversarial data samples in the multi-view setting. 
\item
We describe an unsupervised learning algorithm for training MEGAN from multi-view network data. 
\item 
Through experiments with real-world multi-view network data, we show that MEGAN outperforms other state-of-the-art methods for MVNRL when the learned representations are used for node labeling, link prediction, and visualization.
\end{itemize}


\section{Related Work}
\subsection{Single-view Network Embedding}
Single-view network embedding methods seek to learn a low-dimensional, often non-linear and information preserving embedding of a single-view network for node classification and link prediction tasks. 
There is a growing literature on single-view network embedding methods ~\cite{perozzi2014deepwalk,tang2015line,ou2016asymmetric,wang2016structural,grover2016node2vec,hamilton2017inductive,zhang2018arbitrary}. For example, in~\cite{belkin2001laplacian}, spectral analysis is performed on Laplacian matrix and the top-$k$ eigenvectors are used as the representations of network nodes. DeepWalk~\cite{perozzi2014deepwalk} introduces the idea of Skip-gram, a word representation model in NLP, to learn node representations from random-walk sequences. SDNE~\cite{wang2016structural} uses deep neural networks to preserve the neighbors structure proximity in network embedding.  GraphSAGE~\cite{hamilton2017inductive} generates embeddings by recursively sampling and aggregating features from a node’s local neighborhood. 
GraphGAN~\cite{wang2018graphgan}, which extends GAN~\cite{goodfellow2014generative} to work with networks (as opposed to feature vectors) has shown promising results on the network embedding task. It learns a generator to approximate the node connectivity distribution and a discriminator to differentiate "fake" nodes (adversarial samples) and the nodes sampled from the true data distribution. We differ from their work in mainly two  respects: we focus on the complex multi-view network; we develop connectivity discriminator and generator with novel sampling strategy. 

\subsection{Multi-view Network Embedding}
Motivated by real-world applications, there is a growing interest in methods for learning embeddings of multi-view networks. Such methods effectively integrate information from the individual network views while exploiting complementarity of information supplied by the different views.  To capture the associations across different views, \cite{ma2017multi} utilize a tensor to model the multi-view network and  factorize tensor to obtain a  low-dimensional embedding; MVE \cite{qu2017attention} combine information from multiple views using a weighted voting scheme;  MNE~\cite{zhang2018scalable} use a latent space to integrate information across multiple views. In contrast, MEGAN proposed in this paper implicitly models the associations between views in a latent space and employs a generator that effectively integrates information about pair-wise links between nodes across all of the views.  

\section{Multi-View Network Embedding GAN}
In what follows, we define multi-view network embedding problem before describing the key components of MEGAN, our proposed solution to multi-view network embedding.
\subsection{Multi-View Network Embedding}
A multi-view network is defined as $\mathcal{G} = (\mathcal{V} ,\mathcal{E})$, where $\mathcal{V}= \{v_1, v_2, \dots, v_n \}$ denotes the set of nodes and $\mathcal{E}=\{\mathcal{E}^{(1)},\mathcal{E}^{(2)}, \dots, \mathcal{E}^{(k)}\} $ describes the edge sets that encode  $k$ different relation types (views).
For a given relation type $l$, $\mathcal{E}^{(l)} = \{e_{ij}^{(l)}\}_{i,j=1}^n$ specifies the presence of the corresponding relation between node $v_i$ and node $v_j$. Thus, $e_{ij}^{(l)}=1$ indicates that $v_i$ and $v_j$ has a link for relation $l$ and 0 otherwise.  We use $\mathcal{K}_{ij} = (e_{ij}^{(1)}, \dots, e_{ij}^{(k)})$ to denote the connectivity of $(v_i,v_j)$ in multi-view topology. The goal of multi-view network representation learning is to learn $\mathbf{X} \in \mathbb{R}^{n \times d}$, a low-dimensional embedding of $\mathcal{G}$, where $d \ll n$ is the latent dimension.
\begin{figure}[ht!]
      \centering
        \includegraphics[width=9cm]{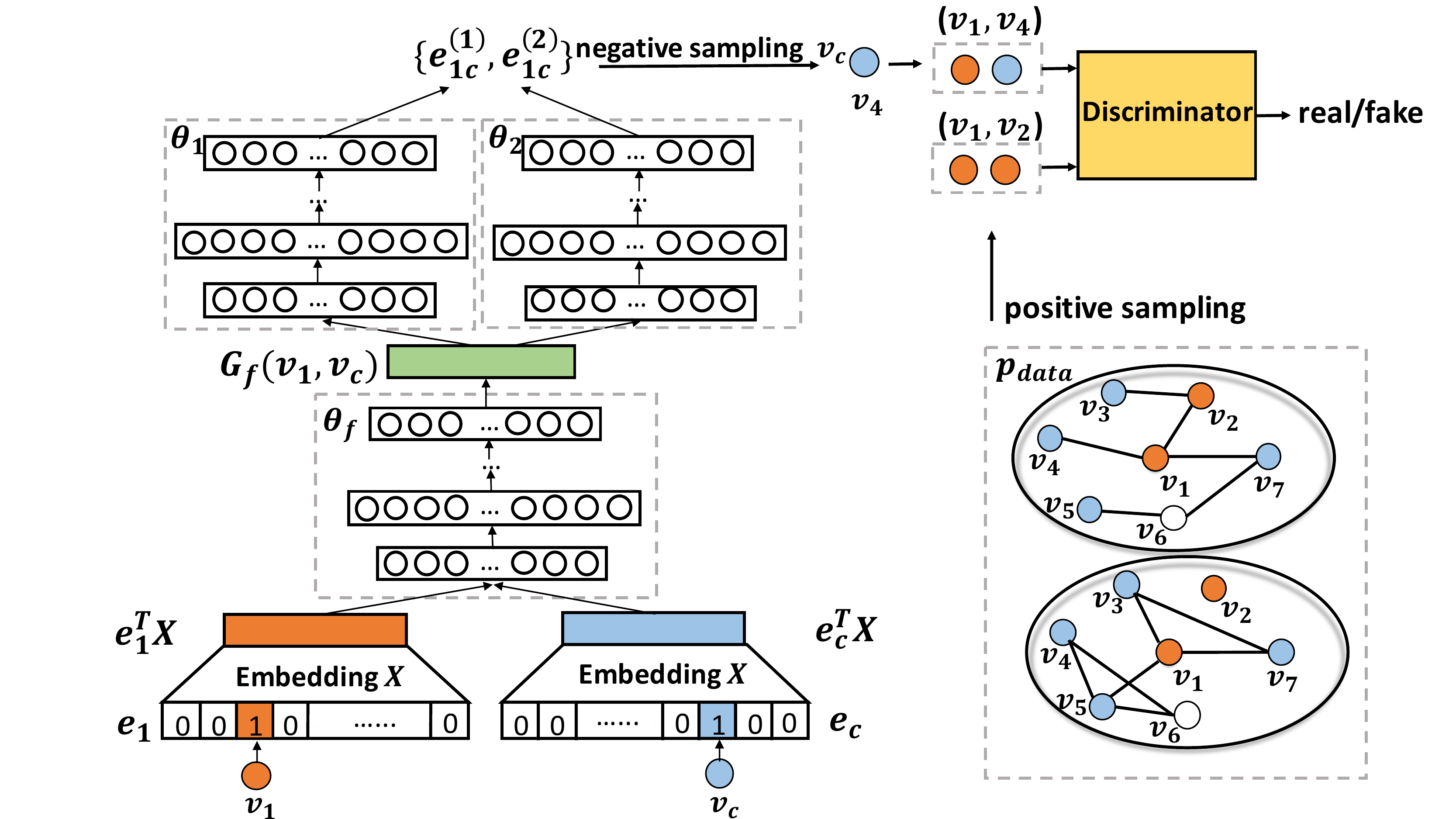}
        \caption{The architecture of MEGAN framework. The pair of the orange nodes $(v_1,v_2)$  denotes {\em real} pair of nodes. For each blue nodes $v_c$, the generator learns its connectivity $\{e^{(1)}_{ic},e^{(2)}_{ic}\}$ and generates {\em fake} pair of nodes which could fool the discriminator with highest probability. In the example, $(v_1, v_4)$ is selected to fool the discriminator.}
        \label{fig:model_architecture}
        \vskip -1em
\end{figure}


\subsection{\textbf{Multi-view Generative Adversarial Network}}
Unlike a GAN designed to perform single-view network embedding, which needs to model only the connectivity among nodes within a single view, the MEGAN needs to capture the connectivity among nodes within each view as well as the correlations between views. To achieve this, given the \textit{real} pair of nodes $(v_i, v_j) \sim p_{data}$, MEGAN consists of two modules: a \textbf{generator} which generates (or chooses) a \textit{fake} node $v_c$ with the connectivity pattern $\mathcal K_{ic}$ between $v_i$ and $v_c$ being sufficiently similar to that of the real pair of node; and a \textbf{discriminator} which is trained to distinguish between the real pair of nodes $(v_i,v_j)$ and \textit{fake} pair of node $(v_i,v_c)$.
Figure~\ref{fig:model_architecture} illustrates the architecture of MEGAN. For the pair of nodes $(v_1, v_2)$, the generator produces a \textit{fake} node $v_4$ forming the pair of nodes $(v_1, v_4)$ to fool the discriminator and  the discriminator is trained to differentiate if a pair of nodes input to it is \textit{real} or \textit{fake}. With the minmax game between $D$ and $G$, we will show that upon convergence, we are able to learn embeddings that $G$ can use to generate the multi-view network. In such situation, the learned embeddings is able to capture the connectivity among nodes within each views and the correlations between views. Next, we introduce the details of $G$, $D$ and an efficient sampling strategy.

\subsubsection{Multi-view Generator}
Recall that goals of multi-view generator $G$ are to \textbf{(i)} generate the multi-view connectivity of fake nodes that
fools the discriminator $D$; and \textbf{(ii)} learn an embedding  that captures the multi-view network topology. To ensure that it achieves the goals, we design the multi-view generator consisting of two components: one fusing generator $G_f$ and $k$ connectivity generators $G^{(l)}$. Let $\mathbf{X} \in \mathbb{R}^{n \times d}$ be the network embedding matrix we want to learn, then the representation of $v_i$ can be obtained as $\mathbf{e}_i^T \mathbf{X}$, where $\mathbf{e}_i$ is the one-hot representation with $i$-th value as 1. The fusing generator $G_f$ firstly fuses the representation of $v_i$ and $v_j$, aiming to capture the correlation between $v_i$ and $v_j$.  We use $G_f(v_i, v_j; \mathbf{X}, \theta_f)$ to denote the fused representation where $\theta_f$ is its parameters. Then, the fused representation is used to calculate the probability of $e_{ij}^{(l)}=1$ for $l=1,\dots, k$. This can be formally written as:
\begin{equation}
\small
\begin{split}
G(e^{(1)}_{ij},\dots,e^{(k)}_{ij}|v_i,v_j,\theta_G)= \prod_{l=1}^kG^{(l)}(e^{(l)}_{ij}|G_f(v_i,v_j;\mathbf{X},\theta_f);\theta^{(l)})
\end{split}
\label{eq4:g}
\end{equation}
where $G^{(l)}$ is the connectivity generator for generating the connectivity in between $(v_i,v_j)$ in view $l$ given the fused representation and $\theta^{(l)}$ is the parameter for $G^{(l)}$. $\theta_G=\{\mathbf{X}, \theta_f, \theta^{(l), l =1, \dots, k}\}$ is the parameter for the multi-view generator $G$. Because each $G^{(l)}$ of $k$ generators are independent given the fused representation, the corresponding parameter $\theta^{(l)}$ can be optimized independently in parallel. We use two layer multi-layer perceptron (MLP) to implement $G_f$ and each $G^{(l)}$. Actually, more complex deep neural networks could replace the generative model outlined here. We leave exploring feasible deep neural networks as a possible future direction. To fool the discriminator, for each $(v_i,v_j) \sim p_{data}$, where $p_{data}$ represents the multi-view connectivity, we propose to sample a pair of negative nodes from $G$ that has connectivity that is similar to that of  $(v_i, v_j)$, e.g., $(v_i, v_c) \sim p_g$ with $\tilde{\mathcal{K}}_{ic}=\mathcal{K}_{ij}$, where $p_g$ denotes the distribution modeled by $G$. In particular, we choose the $v_c$ that has the highest probability as:
\begin{equation}
\argmax_{v_c \in \mathcal V}\prod_{l=1}^k G^{(l)}(e^{(l)}_{ic} =e^{(l)}_{ij} |v_i,v_j,G_f(v_i,v_j;\mathbf{X},\theta_f)
    \label{eq:select}
\end{equation}
The motivation behind the negative sampling strategy in Eq.(\ref{eq:select}) is that the negative node pair $(v_i, v_c)$ is more likely to fool the connectivity discriminator if connectivity $\tilde{\mathcal{K}}_{ic}$ is the same to the connectivity $\mathcal{K}_{ij}$ of the positive pair $(v_i, v_j)$. \textit{The objective of $G$ is then to update network embedding $\mathbf{X}$ and $\theta_G$ so that the generator has higher chance of producing negative samples that can fool the discriminator $D$}.

\subsubsection{Node Pair Discriminator}
The goal of $D$ is to discriminate between the positive node pairs from the multi-view network data and the negative node pairs produced by the generator $G$ so as to enforce $G$ to more accurately fit the distribution of multi-view network connectivity. For this purpose, for an arbitrary node pair sampled from real-data, i.e., $(v_i,v_j) \sim p_{data}$, $D$ should output 1, meaning that the sampled node pair is real. 
Given such a negative or fake edge (pair of nodes), the discriminator should output 0, whereas $G$ should aim to assign high enough probability to negative pair of nodes that can fool $D$. 
As $D$ learns to distinguish the negative pair of nodes from the positive ones, the $G$ captures the connectivity distribution of the multi-view graph. 
We will show this in Section~\ref{sec:theorectical_analysis}.

We define the $D$ as the sigmoid function of the inner product of the input node pair $(v_i,v_j)$:
\begin{equation}
\small
   D(v_i,v_j) =  \frac{\exp(\mathbf{d_i}^T\cdot \mathbf{d_j})}{1+\exp(\mathbf{d_i}^T\cdot \mathbf{d_j})}
   \label{eq2:d}
\end{equation}
where $\mathbf{d}_i$ and $\mathbf{d}_j$ denote the $d$ dimensional representation of node pair $(v_i,v_j)$. It is worth noting that $D(v_i,v_j)$ could be any differentiable function with domain $[0,1]$. We choose the sigmoid function for its stable property and leave the selections of $D(\cdot)$ as a possible future direction.

\subsubsection{Efficient Negative Node Sampling}
In practice, for one pair of nodes $(v_i,v_j)$, in order to sample a pair of nodes $(v_i, v_c)$ from $G$, we need to calculate Eq.(\ref{eq:select}) for all $v_c \in \mathcal{V}$ and select the $v_c$ with the highest probability, which can be very time-consuming. To make the negative sampling more efficient, instead of calculating the probability for all nodes, we  calculate the probability  for neighbors of $v_i$, i.e., $\mathcal{N}(v_i)$, where $\mathcal{N}(v_i)$ is the set of nodes that have connectivity to $v_i$ for at least one view in real network. In other words, we only sample from $\mathcal{N}(v_i)$. The size of $\mathcal{N}(v_i)$ is significantly smaller than $n$ because the network is usually very sparse, which makes the sampling very efficient. 



\subsubsection{Objective Function of MEGAN}

With generator modeling multi-view connectivity to generate fake samples that could fool the discriminator, discriminator differentiates between true pairs of nodes from fake pairs of nodes. We specify the objective function for MEGAN: 
\begin{equation}
\small
\begin{split}
\min \limits_{\theta_G} &\max \limits_{\theta_D}V(G,D)=\sum_{i=1}^{n}(\mathbb{E}_{(v_i, v_j)\sim p_{\rm data}}[{\rm log}D(v_i,v_j;\theta_D)]\\
&+\mathbb{E}_{{(v_i,v_c)\sim p_g}}[{\rm log}(1-D(v_i,v_c;\theta_D))])
\end{split}
\label{ref:eqValue}
\end{equation}
where $(v_i,v_j)\sim p_{\rm data}$ denotes the positive nodes pair and $(v_i,v_c) \sim p_g$ denotes the fake pair of nodes obtained using the efficient negative sampling strategy outlined above. \textit{Through such minmax game, we can learn network embedding $\mathbf{X}$ and the generator $G$ that can approximate the multi-view network to fool the discriminator}. In other words, the learned network embedding $\mathbf{X}$ is able to capture the connectivity among nodes within each views and the correlations between views.

\subsection{Training Algorithm of MEGAN}
Following the standard approach to  training GAN ~\cite{goodfellow2014generative,wang2018graphgan}, we alternate between the updates $D$ and $G$  with mini-batch gradient descent. \\
\noindent{}\textbf{Updating $D$:}
Given that $\theta_D = \{ d_i \}_{i=1}^n$ is differentiable w.r.t to the loss function in Eq.(\ref{ref:eqValue}), the gradient of $\theta_D$ is given as:
\begin{equation}
\small
\begin{split}
    &\nabla_{\theta_D}V(G,D) =\sum_{i=1}^{n}(\mathbb{E}_{(v_i, v_j)\sim p_{\rm data}}[\nabla_{\theta_D}{\rm log}D(v_i,v_j)]\\
&+\mathbb{E}_{{(v_i,v_c)\sim p_g}}[\nabla \theta_D{\rm log}(1-D(v_i,v_c))])
\end{split}
\label{eq3:updateD}
\end{equation}
\noindent{}\textbf{Updating $G$:}
Since we sampled discrete data, i.e., the negative node IDs, from MEGAN, the discrete outputs make it difficult to pass the gradient update from the discriminative model to the generative model. Following the previous work~\cite{yu2017seqgan,wang2018graphgan}, we utilize the policy gradient to update the generator parameters $\theta_G = \{\mathbf{X}, \theta_f, \theta^{(1)}, \dots, \theta^{(k)} \}$:
\begin{equation}
\small
\begin{split}
    &\nabla_{\theta_G}V(G,D)\\ &=\nabla_{\theta_G} \sum_{i=1}^{n}\mathbb{E}_{(v_i,v_c)\sim p_g}[{\rm log}(1-D(v_i,v_c;\theta_D)]\\
     &=\nabla_{\theta_G}\sum_{i=1}^{n}\sum_{c=1}^{n} G(\mathcal K_{ic}|v_i,v_c)[{\rm log}(1-D(v_i,v_c;\theta_D)]\\
        &=\sum_{i=1}^{n}\mathbb{E}_{(v_i,v_c)\sim G}[\nabla_{\theta_G}{\rm log}G(\mathcal K_{ic}|v_i,v_c){\rm log}(1-D(v_i,v_c;\theta_D))]
\end{split}
\label{eq3:updateG}
\end{equation}

\noindent{}\textbf{Training Algorithm}
With the update rules for $\theta_D$ and $\theta_G$ in place, the overall training algorithm is summarized in Algorithm \ref{Multi}. 
In Line 1, we initialize and pre-train the $D$ and $G$. From Line 3 to 6, we update parameters of $G$, i.e., $\theta_G$. From Line 7 to 10, we update the parameters of $D$, i.e., $\theta_D$ 
The $D$ and $G$ play against each other until the MEGAN converges.

\subsection{Theoretical Analysis} \label{sec:theorectical_analysis}
It has been shown in ~\cite{goodfellow2014generative} that $p_g$ could converge to $p_{data}$ in continuous space. In Proposition \ref{prop:1}, we show that $p_g$ also converge $p_{data}$ in discrete space. In other words, upon convergence of Algorithm~\ref{Multi}, MEGAN can learn embeddings that makes $p_g \approx p_{data}$, which captures the multi-view network topology. 

\begin{algorithm}[t]
  \caption{\text{MVGAN} framework}
  \label{Multi}
  \begin{algorithmic}[1]
  \REQUIRE embedding dimension $d$, size of discriminating samples $t$, generating samples $s$
  \ENSURE $\theta_D$, $\theta_G$
 \STATE Initialize and pre-train $D$ and $G$  
  \WHILE{MVGAN not converged} 
  \FOR {G-steps}
    \STATE  Sample $s$ negative pairs of nodes $(v_i, v_c)$ for the given   positive pair of nodes $(v_i,v_j)$
\STATE update $\theta_G$ according to Eq.(\ref{eq4:g}) and Eq.(\ref{eq3:updateG})
    \ENDFOR
    \FOR {D-steps}
    \STATE  Sample $t$ positive nodes pairs $(v_i, v_j)$ and $t$ negative
node pairs $(v_i, v_c)$ from $p_g$ for each node $v_i$
\STATE update $\theta_D$ according to Eq.(\ref{eq2:d}) and Eq.(\ref{eq3:updateD})
    \ENDFOR
  \ENDWHILE
  \end{algorithmic}
\end{algorithm}
\vskip -1.1em
\begin{proposition}
\label{prop:1}
If $G$ and $D$ have enough capacity, the discriminator and the generator are allowed to reach its optimum, and $p_g$ is converge to $p_{data}$ based on the update rule in Algorithm \ref{Multi}.
\end{proposition}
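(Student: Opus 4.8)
The plan is to follow the two-stage argument of the original GAN analysis, adapted to the discrete node-pair setting claimed here. First I would fix the generator $G$ and derive the optimal discriminator. Because the sample space of node pairs is discrete, the value function $V(G,D)$ in Eq.(\ref{ref:eqValue}) can be written as a finite sum over pairs $(v_i, v)$ whose summand is $p_{\rm data}(v_i,v)\log D(v_i,v) + p_g(v_i,v)\log(1 - D(v_i,v))$. Maximizing each summand independently over $D(v_i,v)\in[0,1]$, using the elementary fact that $a\log y + b\log(1-y)$ attains its maximum on $(0,1)$ at $y = a/(a+b)$, yields the optimal discriminator $D^*(v_i,v) = \frac{p_{\rm data}(v_i,v)}{p_{\rm data}(v_i,v) + p_g(v_i,v)}$. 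The discreteness is actually an advantage here, since the pointwise optimization becomes a sum rather than an integral over a continuum.

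Next I would substitute $D^*$ back into $V$ to obtain a generator-only criterion $C(G) = V(G,D^*)$. Rewriting the two expectations and introducing the factor $\log 2$ shows, up to the outer summation over $i$, that $C(G) = -\log 4 + 2\cdot \mathrm{JSD}(p_{\rm data}\,\|\,p_g)$, where $\mathrm{JSD}$ denotes the Jensen--Shannon divergence. Since $\mathrm{JSD}$ is nonnegative and vanishes if and only if its two arguments coincide, $C(G)$ attains its global minimum $-\log 4$ exactly when $p_g = p_{\rm data}$. This establishes that $p_g = p_{\rm data}$ is the unique global optimum of the min-max objective.

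Finally I would address convergence of the alternating updates in Algorithm~\ref{Multi}. Under the stated hypothesis that $D$ is driven to its optimum $D^*$ for the current $G$ in each inner loop, the quantity the generator minimizes is precisely $C(G) = \sup_D V(G,D)$, which is convex in $p_g$. Because a pointwise supremum of functions convex in $p_g$ is itself convex and admits a subgradient at the point where the supremum is attained, the policy-gradient updates of Eq.(\ref{eq3:updateG}) amount to (stochastic) gradient descent on a convex functional of $p_g$, so $p_g$ converges to the minimizer $p_{\rm data}$.

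I expect the main obstacle to be bridging the idealized argument, which treats $p_g$ as an unconstrained distribution, with the actual generator used here. In particular, the factored form of $G$ in Eq.(\ref{eq4:g}), the $\argmax$-based selection of $v_c$ in Eq.(\ref{eq:select}), and the restriction of sampling to the neighborhood $\mathcal{N}(v_i)$ all constrain the realizable $p_g$. The \emph{enough capacity} hypothesis is what lets me assume $G$ can represent an arbitrary $p_g$, but I would still need to argue that these sampling shortcuts neither remove $p_{\rm data}$ from the reachable set nor alter which distribution actually appears in $V(G,D)$, so that the convexity and convergence conclusions continue to apply.
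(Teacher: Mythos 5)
Your argument is essentially the same one the paper relies on: the paper's proof of Proposition~\ref{prop:1} consists solely of the remark that it is ``similar to''~\cite{goodfellow2014generative} with details omitted, and what you have written out is exactly that argument (pointwise-optimal discriminator $D^* = p_{\rm data}/(p_{\rm data}+p_g)$, reduction of $C(G)$ to $-\log 4 + 2\,\mathrm{JSD}(p_{\rm data}\,\|\,p_g)$, and convergence via convexity of $\sup_D V(G,D)$ in $p_g$) transported to the discrete node-pair setting. Your closing caveat --- that the factored generator of Eq.(\ref{eq4:g}), the $\argmax$ selection of Eq.(\ref{eq:select}), and the restriction of sampling to $\mathcal{N}(v_i)$ constrain the realizable $p_g$ in ways the idealized argument does not cover --- flags a genuine gap that the paper itself never addresses, so your write-up is, if anything, more careful than the source.
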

\begin{proof}
The proof is similar to ~\cite{goodfellow2014generative}, and we omit the details here.
\end{proof}
\vskip -1.1em

\vskip -2em
\section{Experiments}
We report results of our experiments with two benchmark multi-view network data sets designed to compare the quality of multi-view embeddings learned by MEGAN and other state-of-the-art network embedding methods. We use the  embedding learned by each method on three tasks, namely, node classification, link prediction, and network visualization. In these tasks, we use the performance as a quantitative proxy measure for the quality of the embedding. We also examine sensitivity of MEGAN w.r.t the choice of hyperparameters. 

\subsection{Data Sets}
We use the following multi-view network data sets\cite{bui2016labeling} in our experiments: (i). {\bf Last.fm}: Last.fm data were collected from the online music network Last.fm\footnote{https://www.last.fm}. The nodes in the network represent users of Last.fm and the edges denote different types of relationships between users, e.g., shared interest in an artist, event, etc. (ii). {\bf Flickr:} Flickr data were collected from the Flickr photo sharing service. The views correspond to different aspects of shared interest between users in photos (e.g., tags, comments, etc.).  The statistics of the data sets are summarized in Table \ref{tab:Statistical}. The number of edges denotes the total number of edges (summed over all of the views). Perhaps not unsurprisingly, the degree distribution of each view of the data approximately follows power-law degree distribution.

\begin{table}[ht!]
\centering
  \label{tab:Statistical}
  \begin{tabular}{cccccc}
    \toprule
    Datasets&\#nodes&\#edges &\#view&\#label \\
    \midrule
    Last.fm & 10,197& 1,325,367&12&11 \\
    Flickr &6,163& 378,547&5 &10\\
  \bottomrule
\end{tabular}
\caption{Summary of Flickr and Last.fm data}
\vskip -1.1em
\end{table}

\subsection{Experiments}
We compare MEGAN with the state-of-the-art single view as well as multi-view network embedding methods.  To apply single-view method, we generate a single-view network from the multi-view network by placing an edge between a pair of nodes if they are linked by an edge in at least one of the views. The single view methods included in the comparison are:
\begin{itemize}[leftmargin=*]
\item node2vec~\cite{grover2016node2vec}, a single view network embedding method, which learns network embedding that maximizes the likelihood of preserving network neighborhoods of nodes.
\item GraphGAN~\cite{wang2018graphgan}, which is a variant of  GAN for learning single-view network embedding.
\item DRNE~\cite{tu2018deep}, which constructs utilizes an LSTM to recursively aggregate the representations of node neighborhoods.
\end{itemize}
The multi-view methods included in the comparison are:
\begin{itemize}[leftmargin=*]    
    \item MNE~\cite{zhang2018scalable}, which jointly learns view-specific embeddings and an embedding that is common to all views with the latter providing a conduit for sharing information across views. 
    \item MVE~\cite{qu2017attention}, which constructs a multi-view network embedding as a weighted combination of the constituent single view embeddings.
\end{itemize}
In each case, the hyperparameters were set according to the suggestions of the authors of the respective methods.  The embedding dimension was set to 128 in all of our experiments.
\vskip -0.5em
\begin{figure}[t!]
    \centering
    \includegraphics[width=8.5cm]{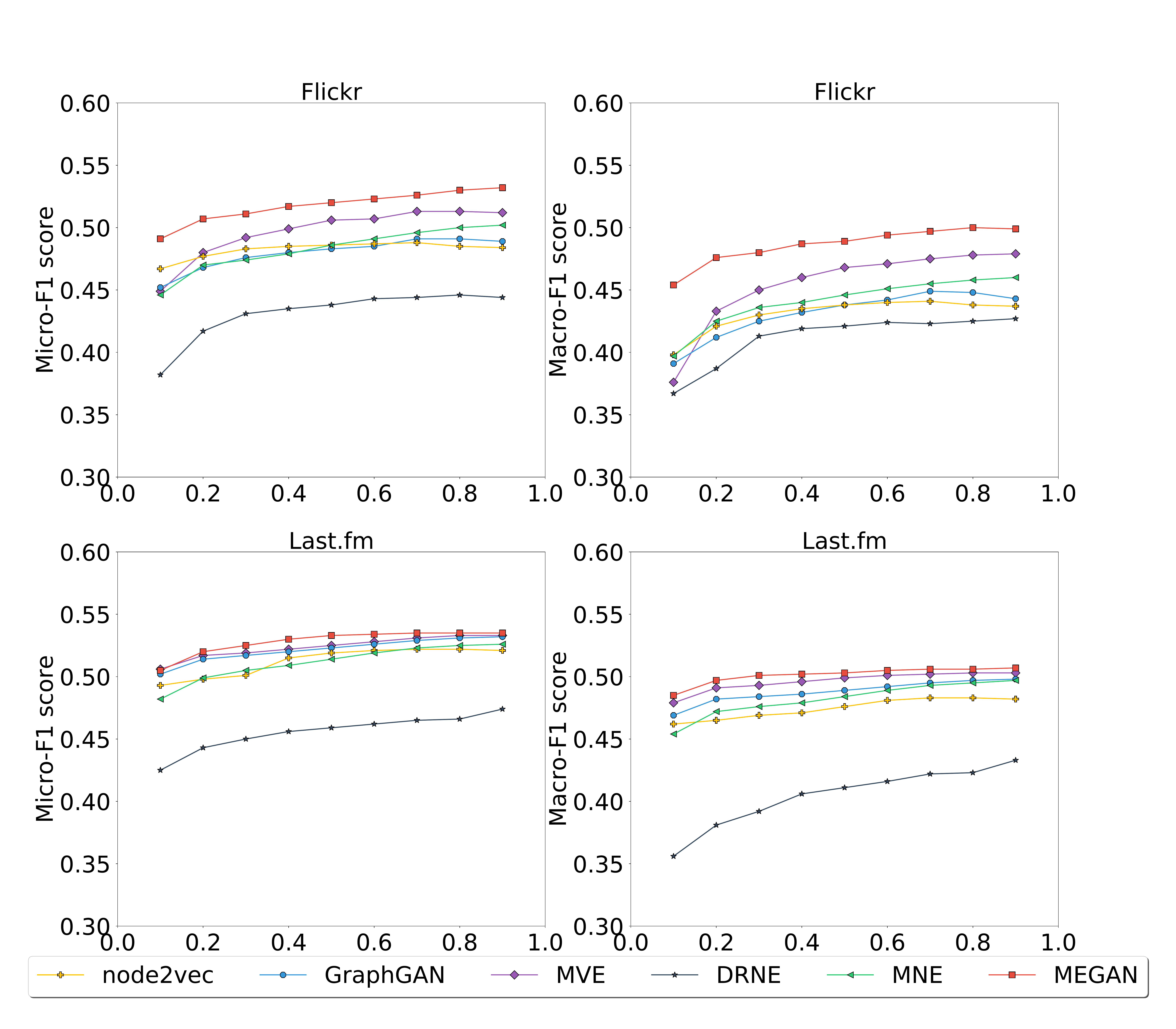}
    \vskip -1em
    \caption{Performance comparison of node classification tasks on Flickr and Last.fm as a function of the fraction of nodes used for training the node classifier}
    \label{Fig_vis:nc}
    \vskip -1em
\end{figure}

\begin{figure}[t!]
      \centering
    \subfigure[Flickr]{
      \includegraphics[width=8.6cm]{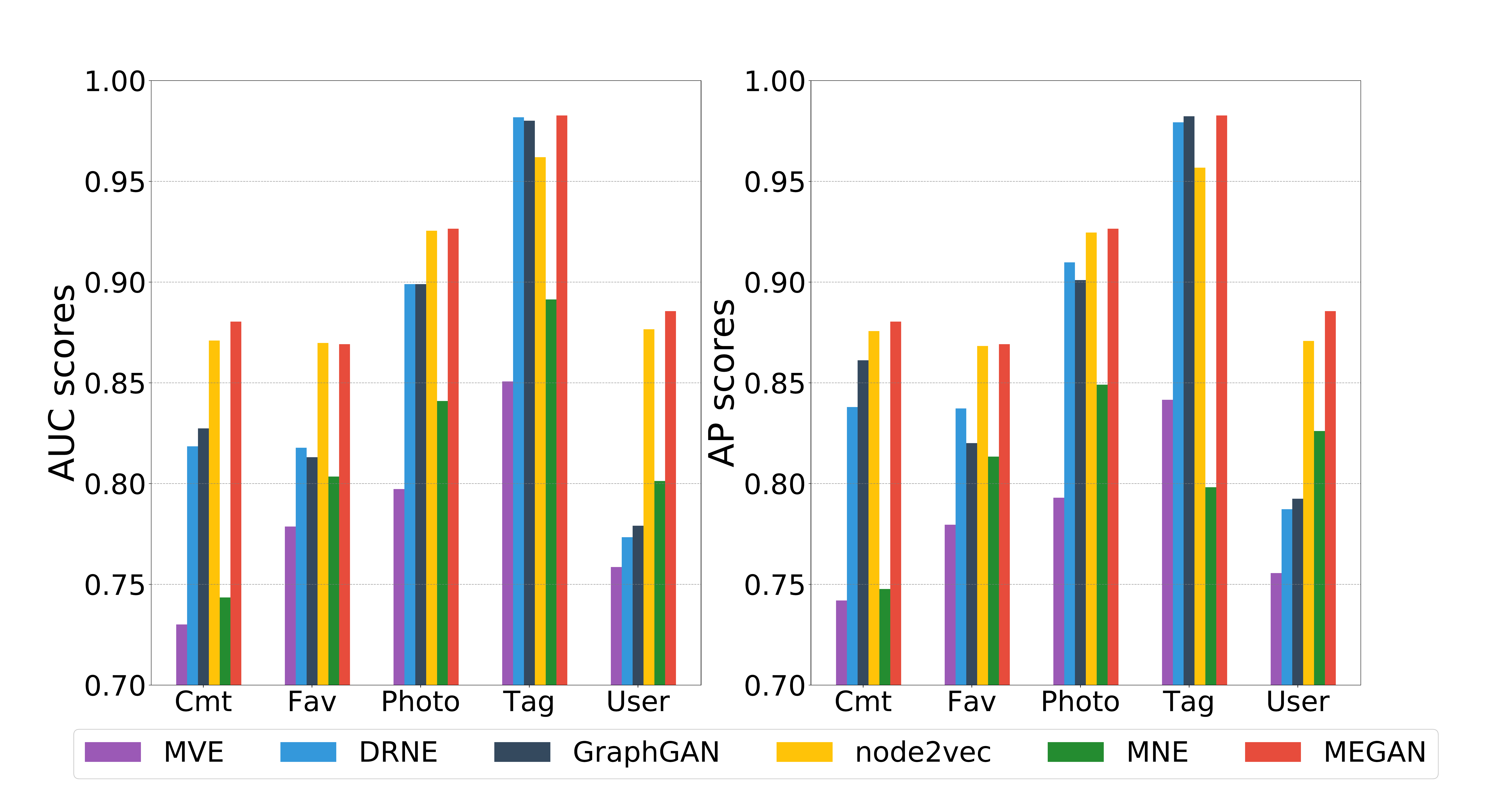}
     \label{Fig_vis:lfFlickr}}
     \vskip  -0.5em
     
    \subfigure[Last.fm]{
        \includegraphics[width=8.6cm]{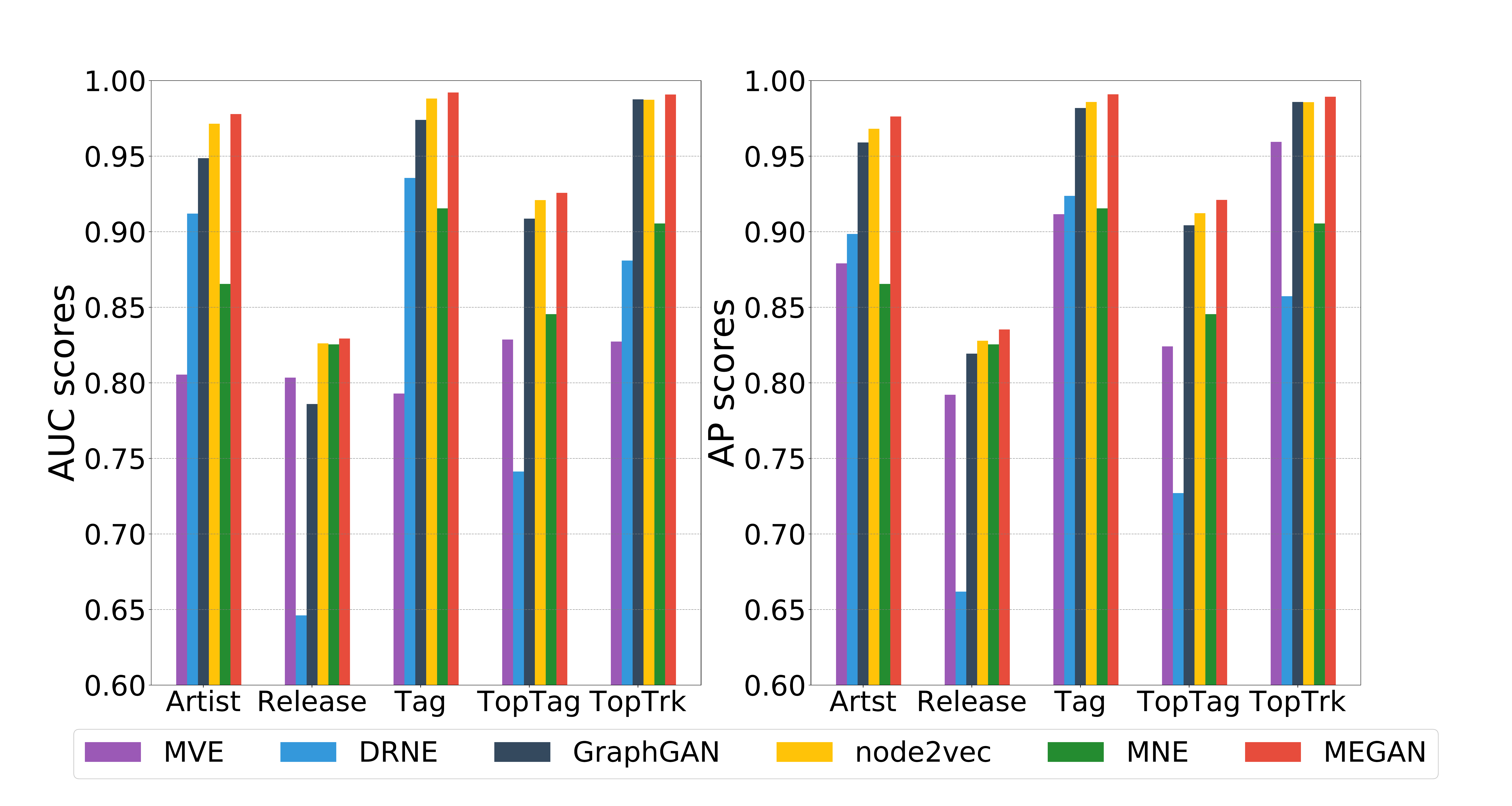}
        \label{Fig_vis:lkLast}}
       \vskip  -1em
     \caption{Performance comparison of link prediction tasks}
     \vskip -1em
\end{figure}
\begin{figure*}[t!]
    \centering
    \subfigure[MEGAN]{
 \includegraphics[width=2.5cm]{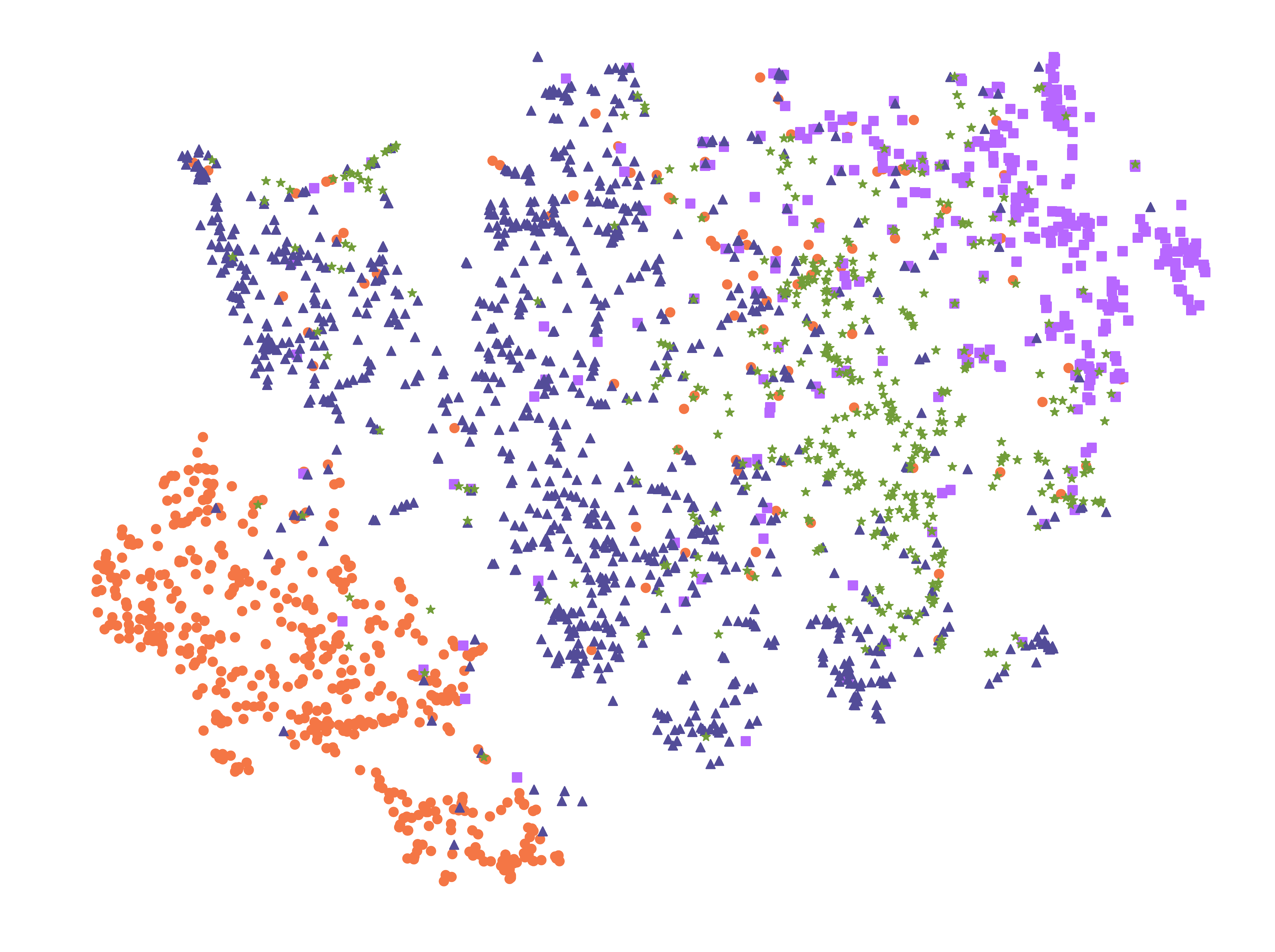}
        \label{Fig_vis:MVGAN}%
    }
    \subfigure[Node2Vec]{ 
        \includegraphics[width=2.5cm]{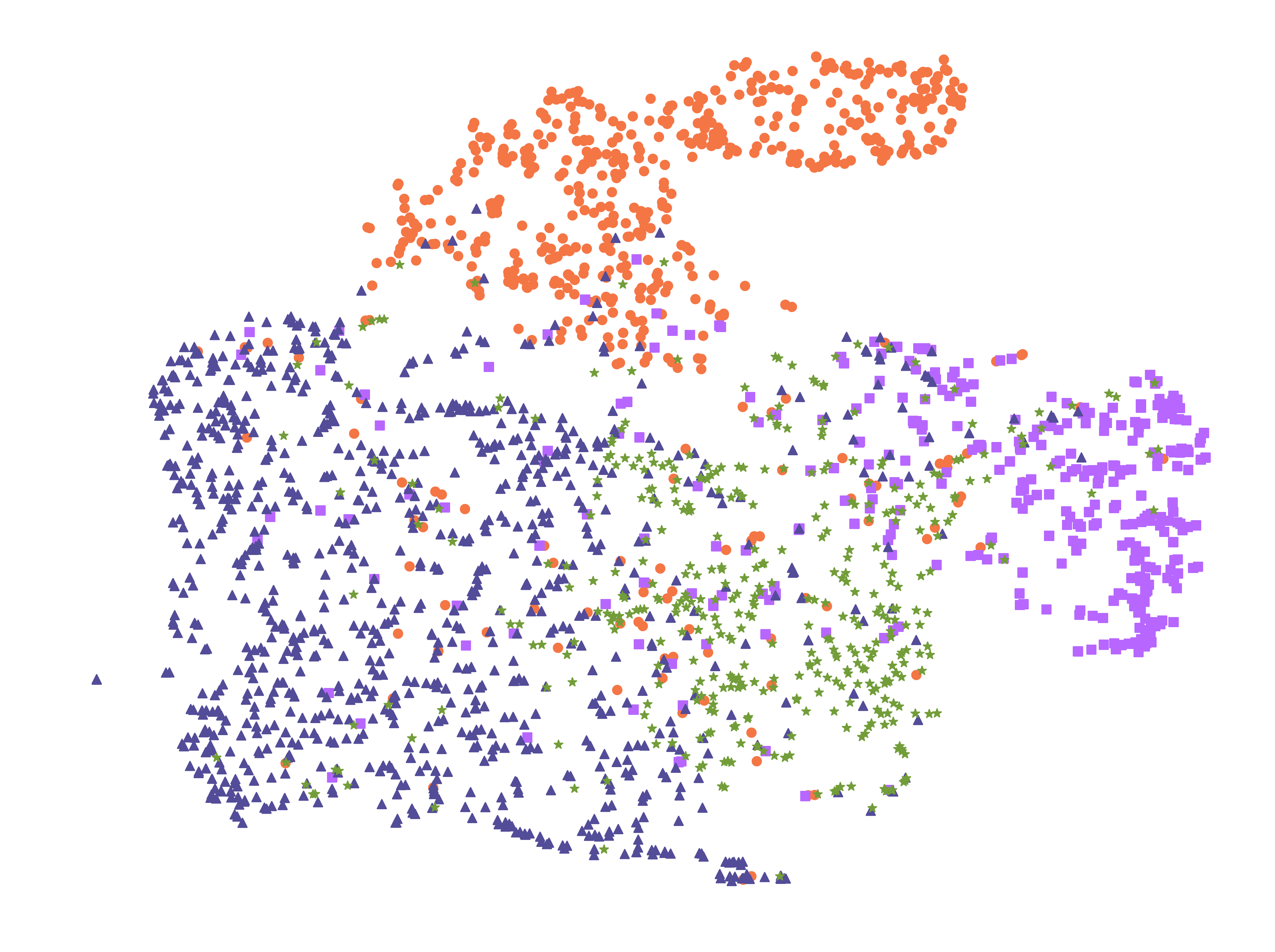}
        \label{Fig_vis:Node2Vec}%
    }
     \subfigure[MVE]{ 
        \includegraphics[width=2.5cm]{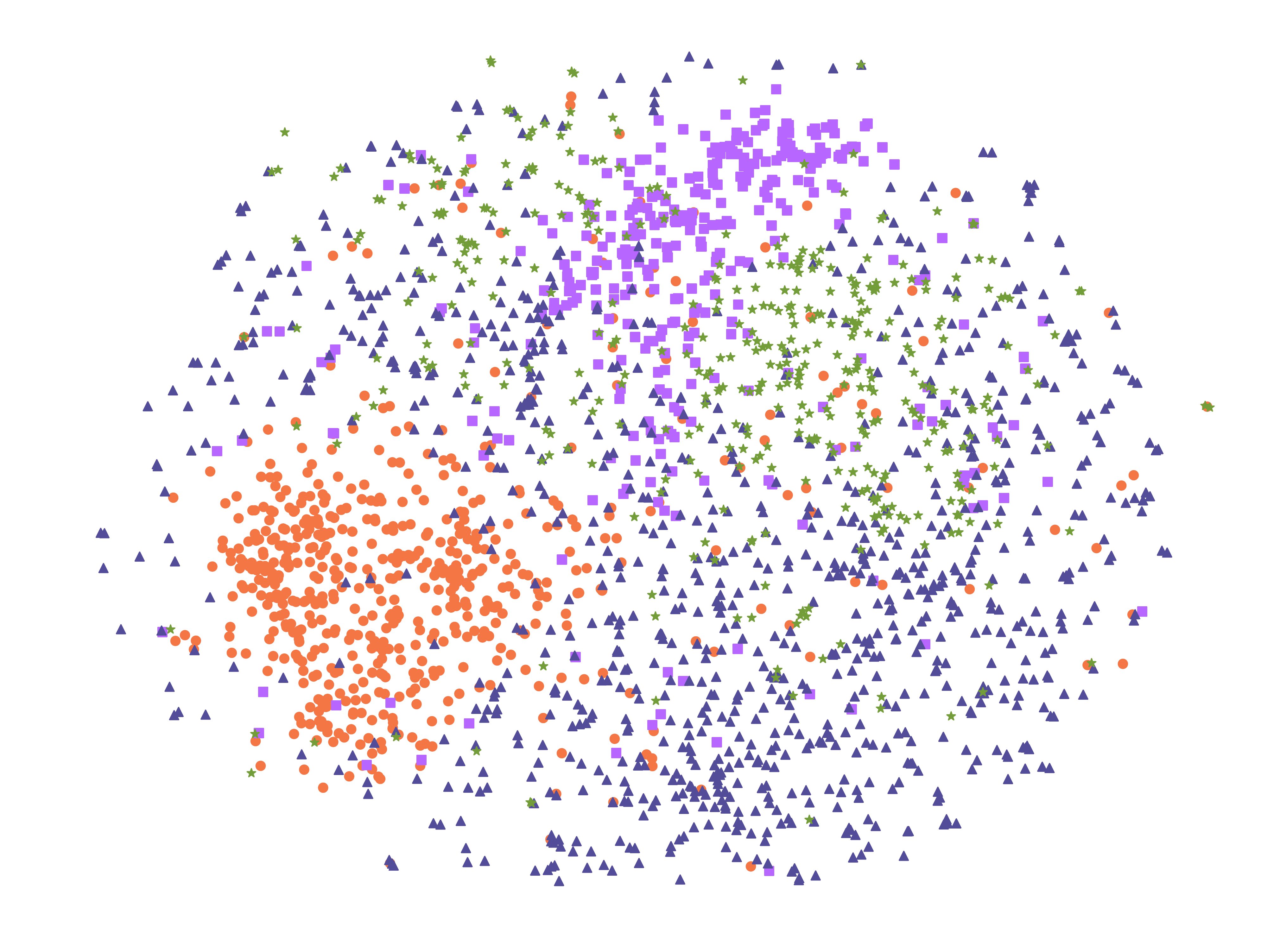}
        \label{Fig_vis:MVE}%
    }
    \subfigure[GraphGAN]{
        \includegraphics[width=2.5cm]{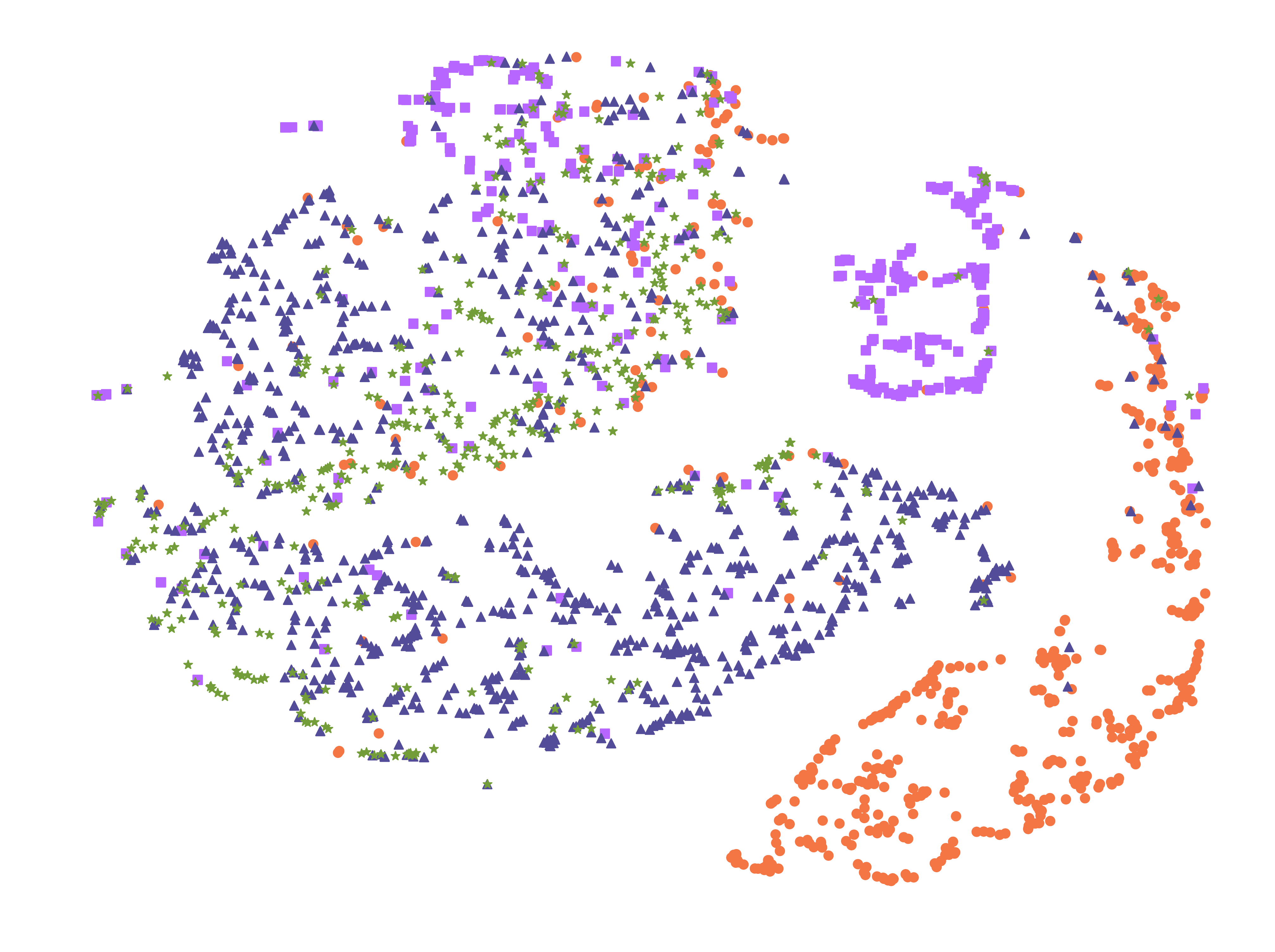}
        \label{Fig_vis:GraphGAN}%
        }
        \subfigure[DRNE]{
        \includegraphics[width=2.5cm]{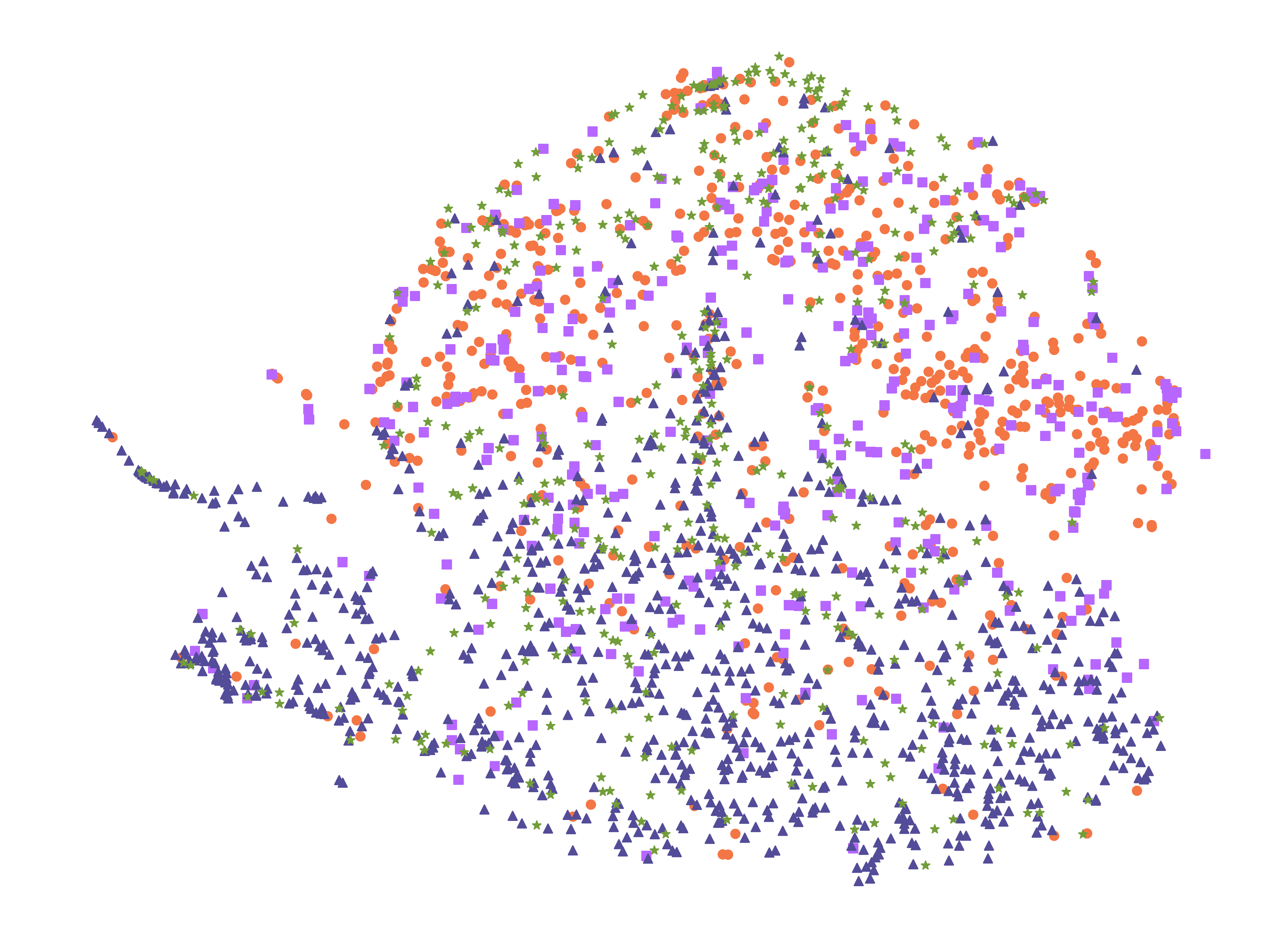}
        \label{Fig_vis:DRNE}}%
        \subfigure[MNE]{
        \includegraphics[width=2.5cm]{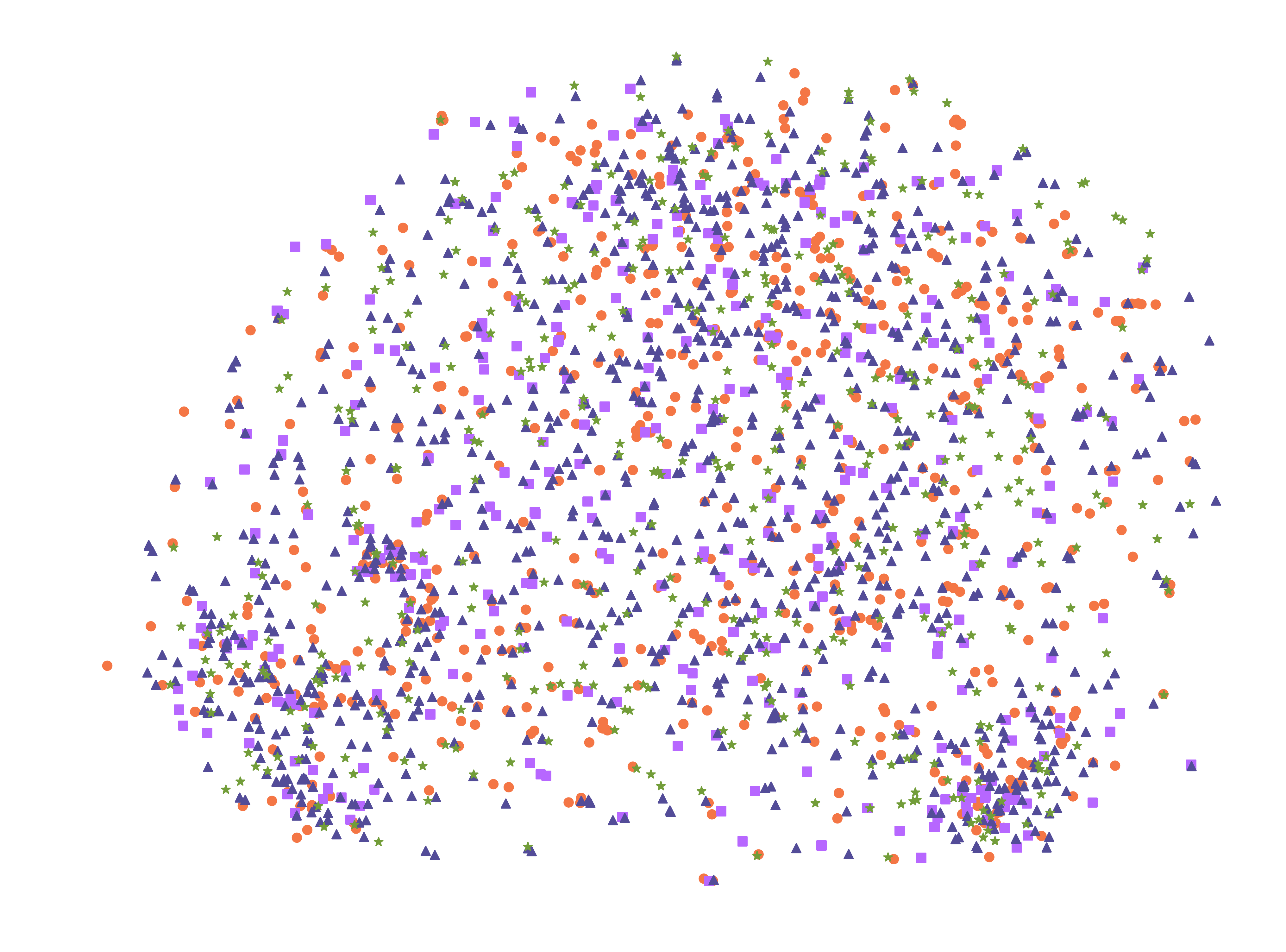}
        \label{Fig_vis:MNE}}%
    \vskip -0.5em
    \caption{Visualization results of Flickr. The users are projected into 2D space. Color of the node represents categories of users.}
    \vskip -1em
    \label{fig:Fig_multicore}
\end{figure*}
\subsection{Results} 
\vskip -0.2em
\subsubsection{Node Classification}\label{sec:node_classification}
We report results of experiments using the node representations  produced by each of the network embedding methods included in our comparison on the transductive node classification task. 
In each case, the network embedding is learned in an unsupervised fashion from the available multi-view network data without making use of the node labels. We randomly select $x$ fraction of the nodes as training  data (with the associated node labels added) and the remaining $(1-x)$ fraction of the nodes for testing. We run the experiments for different choices of $x \in \{0.1, 0.2, \dots, 0.9\}$. In each case, we train a standard one-versus-rest L2-regularized logistic regression classifier on the training data and evaluate its performance on the test data. We report the  performance of the node classification  using  the  Micro-F1 and Macro-F1 scores  averaged over the 10 runs for each choice of $x$ in Fig.~\ref{Fig_vis:nc}. 

Our experiments results show that: \textbf{(i)} The single view methods, GraphGAN and Node2vec achieve comparable performance; \textbf{(ii)} Multi-view  methods, MVE and MVGAN outperform the single-view methods. \textbf{(iii)}  MEGAN outperforms all of the other methods on both data sets. These results further show that multi-view methods that construct embeddings that incorporate complementary information from all of the views outperform those that do not. GAN framework offers the additional advantage of robustness and improved generalization that comes from the use of adversarial samples.  

\subsubsection{Link Prediction}
We report results of experiments using the node representations  produced by each of the network embedding methods included in our comparison on the link prediction task. Given a multi-view network, we randomly select a view, and randomly remove  50\% of the edges present in that view. We then train a classifier on the remaining data to predict the links that were removed.   Following \cite{grover2016node2vec}, we cast the link prediction task as the binary classification problem  with the network edges that were not removed used as positive examples, and an equal number of  randomly generated edges that do not appear in the network as negative examples. We represent links using embeddings of the corresponding pair of nodes.  We train and test a random forest classifier for link prediction.
In the case of the Flickr dataset, we repeat the above procedure on each of the five views; and in the case of the Last.fm data set, present the  results on  five of the most populous views. We report the  area under curve (AUC) and average precision (AP) for link prediction for the Flickr and Last.fm data sets in Fig.\ref{Fig_vis:lfFlickr} and Fig.\ref{Fig_vis:lkLast}, respectively. Based on these results, we make the following observations: \textbf{(i)} There is fairly large variability in the performance of link prediction across the different views. Such phenomenon reveals the differences in the reliability of the link structures in each view; \textbf{(ii)} In some views that are fairly rich in  links, e.g., Tag view of the Flickr data,  single view methods, such as GraphGAN and DRNE, outperform multi-view methods, MVE and MNE, and approaching MEGAN. This suggests that although single view methods may be competitive with multi-view methods when almost all of the information needed for reliable link prediction is available in  a single view, multi-view methods outperform single-view methods when views other than the target view provide complementary information for link prediction in the target view.  and \textbf{(iii)} The MEGAN outperforms its single-view counterpart GraphGAN, suggesting that the MEGAN is able to  effectively integrate complementary information from multiple views into a compact embedding.
\vskip -1em
\subsubsection{Network Visualization}
To better understand the intrinsic structure of the learned network embedding~\cite{tang2016visualizing} and reveal the quality of it, we visualize the network by projecting the embeddings onto a 2-dimensional space. We show the resulting network visualizations on the Flickr network using each of the comparison embedding methods with the t-SNE package\cite{maaten2008visualizing} in Fig.~\ref{fig:Fig_multicore}. In the figure, each color denotes one category of users' interests and we show the results for four of ten categories. Visually, in Fig.\ref{Fig_vis:MVGAN}, we find that the results obtained by using MEGAN appear to yield tighter clusters for the 4 categories with more pronounced separation between clusters as compared to the other methods .  
\begin{figure}[t]
      \centering
    \subfigure[Flickr]{
        \includegraphics[width=4.3cm]{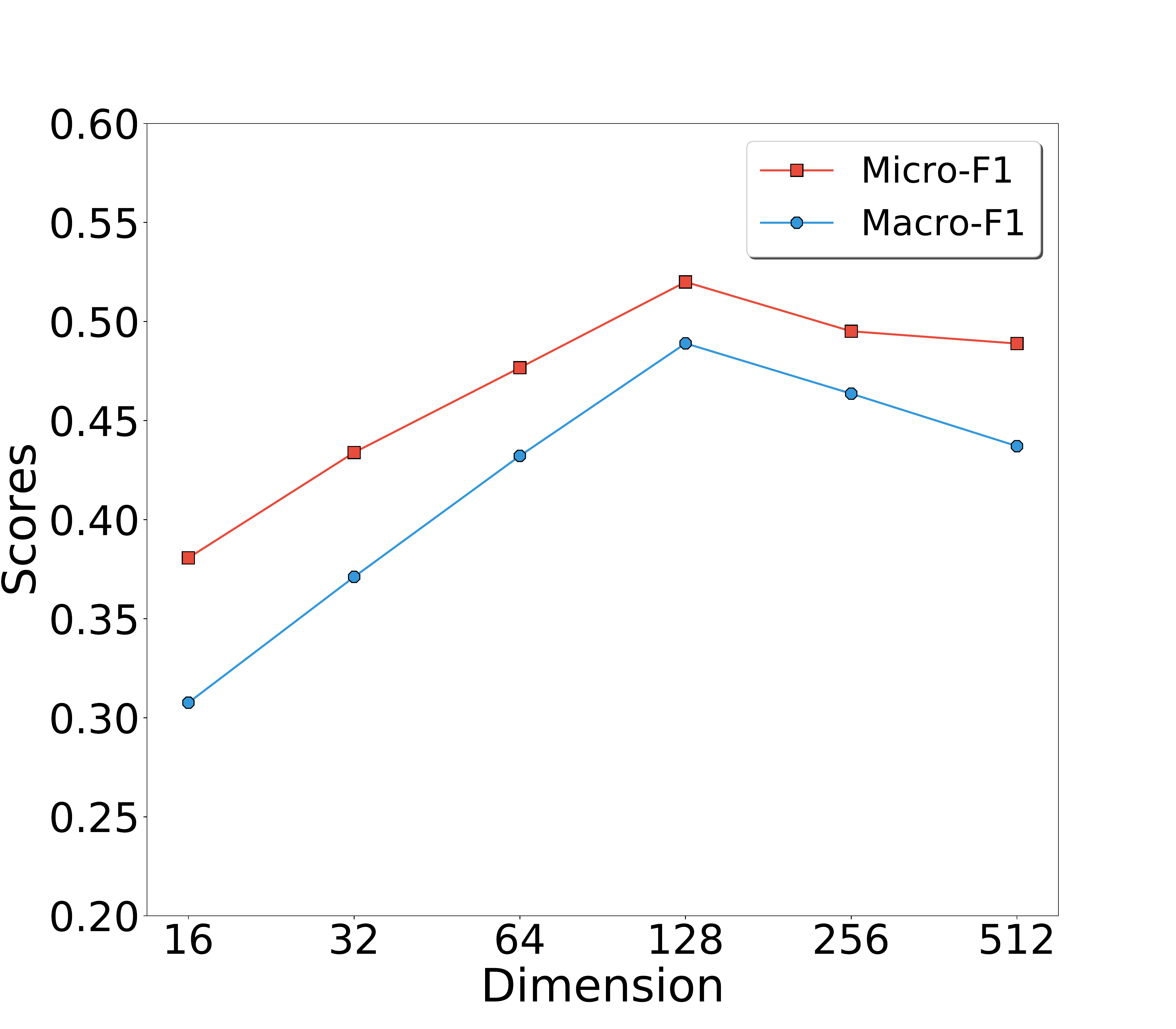}
        \label{Fig_vis:1}}
        \hskip -1.4em
     \subfigure[Last]{
      \includegraphics[width=4.3cm]{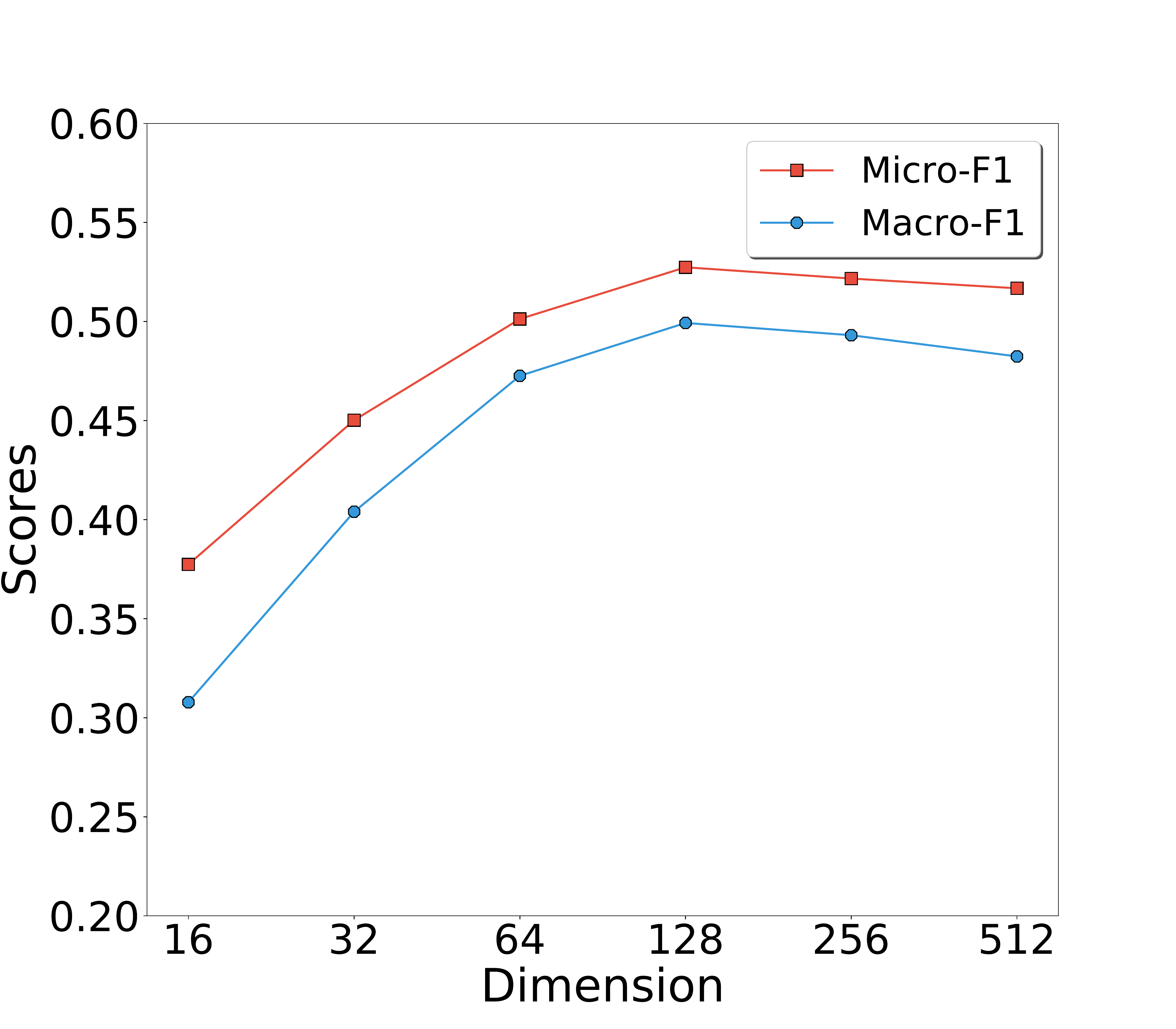}
     \label{Fig_vis:2}}
     \vskip -1em
     \caption{Parameter sensitivity for node classification on (a) Flickr and (b) Last.fm datasets with varying dimensions d.}
     \vskip -1em
\end{figure}
\subsubsection*{Impact of Embedding Dimension}
We report results of the choice of  embedding dimension on the  performance of MEGAN.  We chose  $50\%$ of the nodes randomly for training and the remaining for testing. We used different choices of the dimension $d \in \{16, 32, 64, 128, 256, 512\}$. We report the performance achieved using resulting embeddings on the node classification task using Micro-F1 and Macro-F1 for Flickr and Last.fm in  Fig.\ref{Fig_vis:1} and Fig.\ref{Fig_vis:2}. On these two data sets, we find that MEGAN achieves its optimal performance when $d$ is set to 128. This suggests that in specific applications, it may be advisable to select an optimal $d$ using cross-validation.

\vskip -1.5em


\section{Conclusions}
In this paper, we have considered the multi-view network representation learning problem, which targeting at construct the low-dimensional, information preserving and non-linear representations of {\em multi-view} networks.  Specifically, we have introduced MEGAN, a novel generative adversarial network (GAN) framework for multi-view network embedding aimed at preserving the connectivity within each individual network views, while accounting for the associations across different views. The results of our experiments with several multi-view data sets show that the embeddings obtained using MEGAN outperform the state-of-the-art methods on  node classification, link prediction and network visualization tasks. 

\section*{{\bf Acknowledgements}} This work was funded in part by grants from the  NIH NCATS through the grant UL1 TR002014 and by the NSF through the grants 1518732, 1640834, and 1636795, the Edward Frymoyer Endowed Professorship in Information Sciences and Technology at Pennsylvania State University and the Sudha Murty Distinguished Visiting Chair in Neurocomputing and Data Science funded by the Pratiksha Trust at the Indian Institute of Science (both held by Vasant Honavar). The content is solely the responsibility of the authors and does not necessarily represent the official views of the sponsors.
\newpage
\normalsize

\end{document}